\newtheorem{definition}{Definition}
\newtheorem{theorem}{Theorem}
\newtheorem{proposition}[theorem]{Proposition}
\newtheorem{proof}{Proof}
\newcommand{\card}[1]{\vert #1 \vert}
\newcommand{\length}[1]{\vert #1 \vert}
\newcommand{\lgr}[1]{\left| #1 \right|}
\newcommand{\ov}[2]{ov(#1, #2)}
\newcommand{\norme}[1]{\vert \vert #1 \vert \vert }
\newcommand{\ovpp}{\mathcal{O}v^{+}(P)}
\newcommand{\ie}{\emph{i.e.}}
\newcommand{\mhog}{\text{MarkHOG}}
\newcommand{\bhog}{\text{bHog}}
\newcommand{\true}{\text{\texttt{True}}}
\newcommand{\false}{\text{\texttt{False}}}
\author{Bastien Cazaux, Eric Rivals\\
  L.I.R.M.M. \& Institut Biologie Computationnelle, \\
  Université Montpellier, CNRS U.M.R. 5506\\
  161 rue Ada, F-34392 Montpellier Cedex 5, France\\
  \protect\texttt{cazaux@lirmm.fr,rivals@lirmm.fr}}
\title{Hierarchical Overlap Graph}
\begin{document}
\maketitle
\begin{abstract}
  Given a set of finite words, the Overlap Graph (OG) is a complete weighted digraph where each word is a node and where the weight of an arc equals the length of the longest overlap of one word onto the other (Overlap is an asymmetric notion). The OG serves to assemble DNA fragments or to compute shortest superstrings which are a compressed representation of the input. The OG requires a space is quadratic in the number of words, which limits its scalability. The Hierarchical Overlap Graph (HOG) is an alternative graph that also encodes all maximal overlaps, but uses a space  that is linear in the sum of the lengths of the input words. We propose the first algorithm to build the HOG in linear space for words of equal length.
\end{abstract}

\section{Introduction}\label{sec:intro}

DNA assembly problem arises in bioinformatics because DNA sequencing is unable to read out complete molecules, but instead yields partial sequences of the target molecule, called reads. Hence, recovering the whole DNA sequence requires to assemble those reads, that is to merge reads according to their longest overlaps (because of the high redundancy of sequencing strategies). DNA assembly comes down to building a digraph where sequences are nodes and arcs represent possible overlaps, and second to choosing a path that dictates in which order to merge the reads. The first proposed graph was the Overlap Graph \cite{Peltola-83,TarhioU88}. Throughout this article, the input is $P:=\{s_1, \ldots,s_n\}$ a set of words. Let us denote by $\norme{P} := \sum_1^n\lgr{s_i}$.
\begin{definition}[Overlap Graph]\label{def:overlap:graph}
  The \emph{Overlap Graph} (OG) of $P$ is a complete, directed graph, weighted on its arcs, whose nodes are the words of $P$, and in which the weight of an arc $(u,v)$ equals the length of the maximal overlap from string $u$ to string $v$.
\end{definition}

In the OG (and in its variants such as the String Graph), an optimal assembly path can be computed by finding a Maximum Weighted Hamiltonian Path (which is NP-hard and difficult to approximate \cite{Blum1991}). To build the graph, one has to compute the weights of the arcs by solving the so-called All Pairs Suffix Prefix overlaps problem (APSP) on $P$. Although, Gusfield has given an optimal time algorithm for APSP in 1992, APSP has recently regained attention due to innovation in sequencing technologies that allow sequencing longer reads. Indeed, solving APSP remains difficult in practice for large datasets. Several other optimal time algorithms that improve on practical running times have been described recently, e.g.~\cite{Lim-Park-apsp-2017,Tustumi-etal-apsp-2016}.

The OG has several drawbacks. First, it is not possible to know whether two distinct arcs represent the same overlap. Second, the OG has an inherently quadratic size since it contains an arc for each possible (directed) pairs of words.
 Here, we present an alternative graph, called HOG, which represents all maximal overlaps and their relationships in term of suffix and of prefix. Since the HOG takes a space linear in cumulated lengths of the words, it can advantageously replace the OG. Note that we already gave a definition of the HOG in~\cite{Cazaux-DCC-16}. Here, we proposed the first algorithm to build the HOG in linear space.

In DNA assembly, the de Bruijn Graph (DBG) is also used, especially for large datasets of short reads. For a chosen integer $k$, reads are split into all their $k$-long substrings (termed $k$-mers), which make the nodes of the DBG, and the arcs store only $(k-1)$-long overlaps. Moreover, the relationship between reads and $k$-mers is not recorded. DBGs achieve linear space, but disregard many overlaps whose lengths differ from $k$. Hence, HOGs also represent an interesting alternative to DBGs.

\subsection{Notation and Definition of the Hierarchical Overlap Graph}
\label{sec:hog}
We consider finite, linear of strings over a finite alphabet $\Sigma$ and denote the empty string with $\epsilon$. Let $s$ be a string over $\Sigma$. We denote the length of $s$ by $\length{s}$. For any two integers $i \leq j$ in $[1,\length{s}]$, $s[i,j]$ denotes the linear substring of $s$ beginning at the position $i$ and ending at the position $j$. Then we say that $s[i,j]$ is a \emph{prefix} of $s$ iff $i=1$, a \emph{suffix} iff $j=\lgr{s}$. A prefix (or suffix) $s'$ of $s$ is said \emph{proper} if $s'$  differs from $s$. For another linear string $t$, an overlap from $s$ to $t$ is a proper suffix of $s$ that is also a proper prefix of $t$. We denote the longest such overlaps by $\ov{s}{t}$.
For $A,B$ any two boolean arrays of the same size, we denote by $A \wedge B$ the boolean operation \texttt{and} between $A$ and $B$.

\begin{figure}[htbp]
  \begin{minipage}[b]{0.375\linewidth}
    \subfloat[][Aho Corasik tree of $P$]{\includegraphics[scale=.2703675]{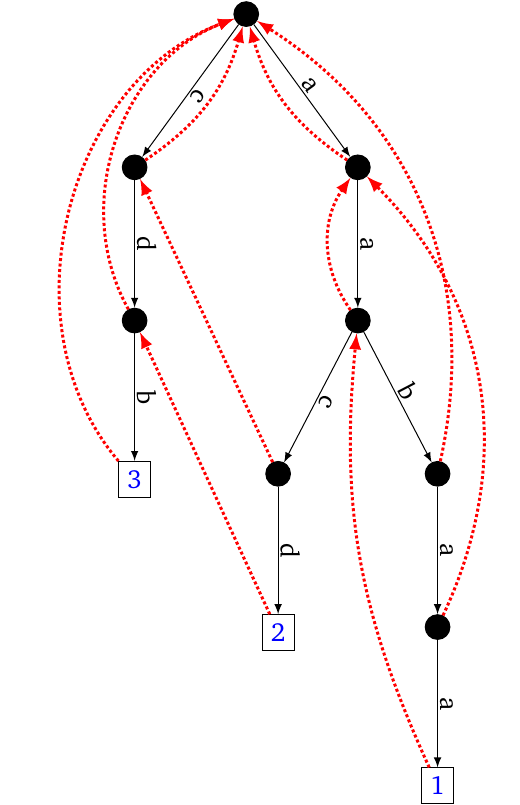}\label{toyaho}}    
  \end{minipage}
  \begin{minipage}[b]{0.3\linewidth}
    \subfloat[][EHOG of $P$]{\includegraphics[scale=.34675]{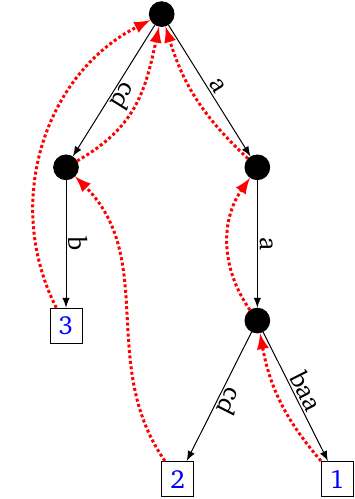}\label{toyehog}}    
  \end{minipage}
  \begin{minipage}[b]{0.3\linewidth}
    \subfloat[][HOG of $P$]{\includegraphics[scale=.34675]{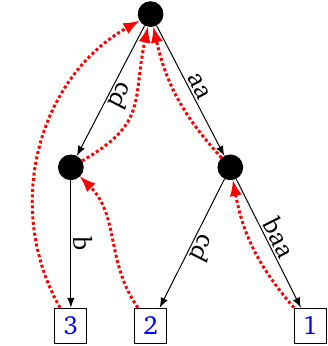}\label{toyhog}}    
  \end{minipage}
  \caption{\label{fig:aho} Consider instance $P := \{ aabaa, aacd, cdb \}$. \protect\subref{toyaho} Aho Corasik tree of $P$ with \texttt{Failure Links} in dotted lines.
     \protect\subref{toyehog} EHOG of $P$ - compared to \protect\subref{toyaho} all nodes that were not overlaps were removed, and the arcs contracted. \protect\subref{toyhog} HOG of $P$; compared to \protect\subref{toyehog} the node $a$ has been removed, and some arcs contracted.}
\end{figure}

\noindent
Let $\ovpp$ be the set of all overlaps between words of $P$. 
Let $\mathcal{O}v(P)$ be the set of \textbf{maximum} overlaps from a string of $P$ to another string or the same string of $P$.

Let us define the \emph{Extended Hierarchical Overlap Graph} and the \emph{Hierarchical Overlap Graph} as follows.

\begin{definition}\label{def:ehog}
  The \emph{Extended Hierarchical Overlap Graph} of $P$, denoted by $EHOG(P)$, is the directed graph $(V^{+}, E^{+})$  where $V^{+} = P\cup \ovpp$ and
  $E^{+}$ is the set:
    \[\{(x,y) \in (P\cup \ovpp)^2 \mid y \text{ is the longest proper suffix of } x \text{ or } x \text{ is the longest proper prefix of } y \}.\]
  \label{def:hog}
  The \emph{Hierarchical Overlap Graph} of $P$, denoted by $HOG(P)$, is the directed graph $(V, E)$ where $V := P\cup \mathcal{O}v(P)$ and $E$ is the set:
    \[\{(x,y) \in (P\cup \mathcal{O}v(P))^2 \mid y \text{ is the longest proper suffix of } x \text{ or } x \text{ is the longest proper prefix of } y \}.\]
\end{definition}
Remark that $\mathcal{O}v(P)$ is a subset of $\ovpp$. Both definitions are identical except that each occurrence of $\ovpp$ in the $EHOG(P)$ is replaced by $\mathcal{O}v(P)$ in the $HOG(P)$. Hence, $HOG(P)$ is somehow included in $EHOG(P)$. Examples of EHOG and HOG are shown in Figures~\ref{toyehog}, \ref{toyhog}.

\subsection{Related works}
\label{sec:related}
In 1990, Ukkonen gave a linear time implementation of the greedy algorithm for finding a linear superstring of a set $P$~\cite{Ukkonen-linear-greedy-1990}. This algorithm is based on the Aho-Corasick (AC) automaton \cite{Aho-Corasick-75}. In AC automaton, the input words are spelled out along a tree structure and each node represents a unique prefix of these words ; hence, the leaves represent the input words (Example in Figure~\ref{toyaho}: the tree structure, which are  is in black).  The AC automaton contains additional arcs called, \texttt{Failure links}, which link two nodes say $(x,y)$ if $y$ is the longest suffix of $x$ that is a node in the tree. 
Ukkonen characterised which nodes of the tree correspond to overlaps between words of $P$~\cite[Lemma~3]{Ukkonen-linear-greedy-1990}. In the EHOG, beyond the leaves, we keep only those nodes that are overlaps (Figure~\ref{toyehog}). From this lemma, it follows that $EHOG(P)$ is embedded in the Aho-Corasick automaton: the nodes of $EHOG(P)$ are a subset of those of the automaton, its arcs are contracted \texttt{goto} transitions or contracted \texttt{Failure Links}.

\textbf{Algorithm to build the EHOG}. Given the Aho-Corasick automaton of $P$ (whose underlying structure is a tree spelling out the words of $P$ from the root -- this tree is commonly called the \emph{trie}), for each leaf, follow the chain of \texttt{Failure links} up to the root and mark all visited nodes. Thanks to \cite[Lemma~3]{Ukkonen-linear-greedy-1990} all such nodes are overlaps. Another traversal of the AC automaton suffices to remove unmarked nodes and to contract the appropriate arcs. This algorithm takes $O(\norme{P})$ time. An involved algorithm to build a memory compact version of the EHOG is detailed in~\cite{CanovasCR17}. In the sequel, whenever we refer to the \emph{tree} structure, it is the tree structure of the EHOG, we mean the tree defined by the \texttt{goto} arcs in the EHOG (which appears in black in Figure~\ref{toyehog}).

An algorithm for computing a shortest cyclic cover of a set of DNA sequences needs to build either the EHOG or HOG~\cite{Cazaux-DCC-16}. There, we stated mistakingly in Theorem~3 that the HOG could be built in $O(\norme{P})$ time, although we meant the EHOG. The construction algorithm uses the Generalised Suffix Tree of $P$ to detect nodes representing overlaps as explained in~\cite{Gusfield-apsp-1992}.

The HOG is reminiscent of the "Hierarchical Graph" of~\cite{Golovnev-exact-IPL-2014}, where the arcs also encodes inclusion between substrings of the input words. However, in the Hierarchical Graph, each arc extends or shortens the string by a single symbol, making it larger than the HOG.


\section{Construction algorithm for the HOG}
\label{sec:algo:hog}

All internal nodes of $EHOG(P)$ are overlaps between words of $P$, while those of $HOG(P)$ are \textit{maximal} overlaps between words of $P$.  Given the EHOG of $P$, to build $HOG(P)$ we need to discard nodes that are not maximal overlaps and to merge the arcs entering and going out of such nodes. This processing can be performed in linear time on the size of $EHOG(P)$ provided that the nodes of $HOG(P)$ are known. We present Algorithm~\ref{algo:hog}, which recapitulates this construction procedure. Once $EHOG(P)$ is built, each of its internal node $u$ is equipped with a list $R_l(u)$, whose meaning is explained below. Then, Algorithm~\ref{algo:hog} calls the key procedure $\mhog(r)$ on line~\ref{line:call:mhog} to mark the nodes of $HOG(P)$ in a global boolean array denoted $\bhog$, which we store in a \emph{bit vector}. The procedure for $\mhog$ is given in Algorithm~\ref{algo:mhog}. Once done, it contracts $EHOG(P)$ in place to obtain $HOG(P)$.

\begin{algorithm}[htb]
  \DontPrintSemicolon
  \Input{$P$ a substring free set of words}
  \Output{$HOG(P)$; \textbf{Variable}: $\bhog$ a bit vector of size $\card{EHOG(P)}$} 
  build $EHOG(P)$\;
  set all values of $\bhog$ to $\false$\;
  traverse $EHOG(P)$ to build $R_l(u)$ for each internal node $u$ of $EHOG(P)$\;
  run $\mhog(r)$ where  $r$ is the root of $EHOG(P)$\label{line:call:mhog}\;%
  {Contract}$(EHOG(P), \bhog)$\label{line:hog:contract}\; 
  \tcp*{Procedure $\text{Contract}$ traverses $EHOG(P)$ to discard nodes that are not marked in $\bhog$ and contract the appropriate arcs}
  \caption{$HOG$ construction\protect\label{algo:hog}}
\end{algorithm}

\paragraph{Meaning of $R_l(u)$}%
For any internal node $u$, $R_l(u)$ lists the words of $P$ that admit $u$ as a suffix. Formally stated:
$R_l(u) := \{i \in \{1,\ldots,\card{P}\} : u \text{ is suffix of } s_i\}$.
As we need to iterate over $R_l(u)$, it is convenient to store it in a list of integers.
A traversal of $EHOG(P)$ allows to build a list $R_l(u)$ for each internal node $u$ as stated in~\cite{Ukkonen-linear-greedy-1990}. Remark that, while our algorithm processes $EHOG(P)$, Ukkonen's algorithm processes the full trie or Aho Corasik automaton, in which $EHOG(P)$ is embedded. The cumulated sizes of all $R_l$ is linear in $\norme{P}$ (indeed, internal nodes represent different prefixes of words of $P$ and have thus different begin/end positions in those words).

\paragraph{Node of the EHOG and overlaps}
The following proposition states an important property of EHOG nodes in terms of overlaps. It will allow us not to consider all possible pairs in $P\times P$, when determining maximum overlaps.

\begin{proposition}\label{prop:node:ov}
  Let $u$ be an internal node of $EHOG(P)$ and let $i \in R_l(u)$. Then, for any word $s_j$ whose leaf is in the subtree of $u$, we have
  $\ov{s_i}{s_j} \geq \lgr{u}$.
\end{proposition}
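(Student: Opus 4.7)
My plan is to display $u$ itself as an overlap from $s_i$ to $s_j$; the conclusion $\ov{s_i}{s_j} \geq \lgr{u}$ then follows directly because $\ov{s_i}{s_j}$ denotes the length of the \emph{longest} such overlap.

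First, I would unfold the two hypotheses. By definition of $R_l$, having $i \in R_l(u)$ means that $u$ is a suffix of $s_i$. For the second hypothesis, I would recall that the tree of $EHOG(P)$ is obtained from the Aho-Corasick trie of $P$ by keeping only overlap nodes and leaves and contracting \texttt{goto} arcs; this contraction preserves the fundamental property that the path from the root to any node spells the string represented by that node. Consequently, every string whose leaf lies in the subtree rooted at $u$ admits $u$ as a prefix. Applied to $s_j$, this yields that $u$ is a prefix of $s_j$. So $u$ is simultaneously a suffix of $s_i$ and a prefix of $s_j$.

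Second, I would argue that both inclusions are strict. The input $P$ is substring free (input hypothesis of Algorithm~\ref{algo:hog}), so no word of $P$ is a prefix or suffix of another word of $P$. From this it follows that no word node of the EHOG can carry a descendant in the tree (otherwise such a word would be a proper prefix of either another word of $P$ or of some overlap which is itself a proper prefix of a word of $P$). Since $u$ is internal, $u$ cannot coincide with any $s_k$; in particular $\lgr{u} < \lgr{s_i}$, so $u$ is a \emph{proper} suffix of $s_i$. Symmetrically, because the leaf of $s_j$ lies strictly below $u$ in the tree, $\lgr{u} < \lgr{s_j}$, so $u$ is a \emph{proper} prefix of $s_j$.

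Combining the two points, $u$ meets the paper's definition of an overlap from $s_i$ to $s_j$, and therefore $\ov{s_i}{s_j} \geq \lgr{u}$. The only step that is not pure definition chasing is the correspondence ``descendant of $u$ in the EHOG tree $\Leftrightarrow$ string has $u$ as a prefix''; this is where I expect the main subtlety to lie, since it relies on the fact that the tree structure of the EHOG is inherited from the AC trie via the contraction procedure described after Ukkonen's lemma. Once that correspondence is granted, and once substring-freeness is used to rule out the degenerate case $u = s_k$, the proof is immediate.
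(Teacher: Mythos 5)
Your proof is correct and takes the same route as the paper's one-line argument: exhibit $u$ itself as an overlap from $s_i$ to $s_j$ (suffix of $s_i$ via $i \in R_l(u)$, prefix of $s_j$ via the subtree condition, properness from $u$ being internal and $P$ substring-free), so the \emph{longest} overlap has length at least $\lgr{u}$. You simply spell out the details that the paper compresses into ``as $u$ belongs to $EHOG(P)$, we get that $u$ is an overlap from $s_i$ to $s_j$.''
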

\begin{proof}
  Indeed, as $u$ belongs to $EHOG(P)$, we get that $u$ is  an overlap from $s_i$ to $s_j$.
  Thus, their maximal overlap has length at least $\lgr{u}$. \hfill \textbf{\textsc{QED}}
\end{proof}

\paragraph{Description of Algorithm~\ref{algo:mhog}}
We propose Algorithm~\ref{algo:mhog}, a recursive algorithm that determines which nodes belong to $\mathcal{O}v(P)$ while traversing $EHOG(P)$ in a depth first manner, and marks them in $\bhog$. At the end of the algorithm, for any node $w$ of $EHOG(P)$, the entry $\bhog[w]$ is \true\ if and only if $w$ belongs to $HOG(P)$, and \false\ otherwise.

By the definition of a HOG, the leaves of the $EHOG(P)$, which represent the words of $P$, also belong to $HOG(P)$ (hence, line~\ref{line-bhog-leaf}).

The goal is to distinguish among internal nodes those representing maximal overlaps (\ie, nodes of $\mathcal{O}v(P)$) of at least one pair of words. We process internal nodes in order of decreasing word depth.
By Proposition~\ref{prop:node:ov}, we know that for any $i$ in $R_l(u)$, $s_i$ overlaps any $s_j$ whose leaf is in the subtree of $u$. However, $u$ may not be the maximal overlap for some pairs, since longer overlaps may have been found in each subtree. Indeed, $u$ can be a maximal overlap from $s_i$ onto some $s_j$, if and only if for any child $v$ of $u$, $s_i$ has not already a maximal overlap with the leaves in the child's subtree.  Hence, to check this, we compute the $C$ vector for each child by recursively calling $\mhog$ for each child, and merge them with a boolean \texttt{and} (line~\ref{line:mhog:and}). We get invariant~\#1:\\
\centerline{$C[w]$ is \true\ iff for any leaf $l$ in the subtree of $u$ the pair $\ov{w}{l} \mathbf{>} \lgr{u}$.}
Then, we scan $R_l(u)$, for each word $w$ if $C[w]$ is \false, then for at least one word $s_j$, the maximum overlap of $w$ onto $s_j$ has not been found in the subtree.  By Proposition~\ref{prop:node:ov}, $u$ is an overlap from $w$ onto $s_j$, and thus we set both $C[w]$ and $\bhog[u]$ to \true\ (lines~\ref{algo:mhog:update:deb}-\ref{algo:mhog:update:fin}). Otherwise, if $C[w]$ is \true, then it remains so, but then $\bhog[u]$ remains unchanged.  This yields Invariant \#2:\\
\centerline{$C[w]$ is \true\ iff for any leaf $l$ in the subtree of $u$ the pair $\ov{w}{l} \mathbf{\geq} \lgr{u}$.}
This ensures the correctness of $\mhog$.

\begin{algorithm}[htb]
  \DontPrintSemicolon
  \Input{$u$ a node of $EHOG(P)$}
  \Output{$C$: a boolean array of size $\card{P}$}
  %
  \If{$u$ is a leaf}{
    set all values of $C$ to $\false$\;
    $\bhog[u] := \true$\label{line-bhog-leaf}\;
    \Return $C$\;
  }
  %
  \tcp{Cumulate the information for all children  of $u$}
  $C := \mhog(v)$ where $v$ is the first child of $u$\;
  \ForEach{$v$ among the other children of $u$}{
    $C := C \wedge \mhog(v)$\protect\label{line:mhog:and}\;
  }
  \tcp{Invariant \textbf{1}: $C[w]$ is \true\ iff for any leaf $s_j$ in the subtree of $u$ the pair $\ov{w}{s_j}> \lgr{u}$}
  %
  \tcp{Process overlaps arising at node $u$: Traverse the list $R_l(u)$}
  \For{node $x$ in the list  $R_l(u)$} {
    \If{$C[x] = \false$     \label{algo:mhog:update:deb}}	
    {$\bhog[u] := \true$}
    $C[x] := \true$\;    \label{algo:mhog:update:fin}
  }
  \tcp{Invariant \textbf{2}: $C[w]$ is \true\ iff for any leaf $s_j$ in the subtree of $u$ the pair $\ov{w}{s_j} \geq \lgr{u}$}
  \Return $C$
  \caption{$\mhog(u)$; \label{algo:mhog}}
\end{algorithm}


The complexity of Algorithm~\ref{algo:hog} is dominated by the call of the recursive algorithm $\mhog(r)$ in line~\ref{line:call:mhog}, since computing $EHOG(P)$, building the lists $R_l(.)$ and contracting the EHOG into the HOG
 (line~\ref{line:hog:contract}) all take linear time in $\norme{P}$.

How many simultaneous calls of $\mhog$ can there be? Each call uses a bit vector $C$ of length $\card{P}$, which impacts the space complexity. The following proposition helps answering this question.
\begin{proposition}\label{prop:mhog:calls}
  Let $u,v$ be two nodes of  $EHOG(P)$. Then if $u$ and $v$ belong to distinct subtrees, then $\mhog(u)$ terminates before $\mhog(v)$ begins, or vice versa.
\end{proposition}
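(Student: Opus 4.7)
The plan is to prove the proposition by exploiting the fact that \mhog\ is invoked in strict depth-first order: at every internal node $w$, the children of $w$ are processed one after the other, each recursive call fully returning before the next one begins. Hence two recursive invocations overlap in time only when one is nested inside the other, which can only happen if one of the two nodes is an ancestor of the other.

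More precisely, I would first observe that the recursive structure of $\mhog$ defines, for each node $w$, a time interval $I(w) = [\text{start}(w), \text{end}(w)]$ during which $\mhog(w)$ is active, and that these intervals form a laminar family with respect to the tree structure of $EHOG(P)$: by a straightforward induction on the height of $w$, $I(w)$ is the disjoint concatenation (in the order the \textbf{foreach} loop visits them) of the intervals $I(v)$ for $v$ a child of $w$, together with the constant-time bookkeeping before and after the loop. In particular, the intervals associated with two distinct children of the same node are disjoint.

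Now suppose $u$ and $v$ lie in distinct subtrees, meaning neither is an ancestor of the other. Let $w$ be their lowest common ancestor in the tree of $EHOG(P)$; by assumption $w \neq u$ and $w \neq v$. Let $c_u$ (resp.\ $c_v$) be the child of $w$ whose subtree contains $u$ (resp.\ $v$); by the choice of $w$ we have $c_u \neq c_v$. By the laminar property, $I(u) \subseteq I(c_u)$ and $I(v) \subseteq I(c_v)$, and $I(c_u)$ and $I(c_v)$ are disjoint. Thus either $I(u)$ ends before $I(v)$ begins or vice versa, which is exactly the claim.

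The argument is essentially a structural observation about depth-first recursion, so I do not foresee a real obstacle; the only subtlety is to state precisely what ``distinct subtrees'' means (namely, that $u$ is not an ancestor of $v$ and $v$ is not an ancestor of $u$, so that their lowest common ancestor is a strict ancestor of both) and to invoke the laminarity of the call intervals cleanly. The proposition is then immediate.
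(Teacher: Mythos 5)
Your argument is correct. Note that the paper itself gives no proof of this proposition at all: it is stated bare and the text moves straight on to the consequence about the number of simultaneously running calls, evidently treating the claim as an immediate property of depth-first recursion. Your write-up is the natural formalization of that implicit reasoning: the \textbf{foreach} loop in Algorithm~2 invokes $\mhog$ on the children of a node strictly sequentially, so the call intervals form a laminar family aligned with the tree, intervals of distinct children of the lowest common ancestor are disjoint, and the intervals of $u$ and $v$ are contained in two such disjoint intervals. Your care in pinning down what ``distinct subtrees'' means (neither node an ancestor of the other, so the LCA is a proper ancestor of both) is exactly the point the paper glosses over, and the induction establishing laminarity is routine but sound. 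Nothing is missing.
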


Hence, the maximum number of simultaneously running $\mhog$ procedures is bounded by the maximum (node) depth of $EHOG(P)$, which is itself bounded by the length of the longest word in $P$.
Now, consider the amortised time complexity of Algorithm~\ref{algo:mhog} over all calls. For each node, the corresponding $C$ vector is computed once and merged once during the processing of his parent.
Moreover, if amortised over all calls of $\mhog$, the processing all the lists $R_l$ for all nodes take linear time in $\norme{P}$. Altogether all calls of Algorithm~\ref{algo:mhog} require  $O(\norme{P} + \card{P}^2)$ time.
The space complexity sums the space occupied by $EHOG(P)$, \ie\ $O(\norme{P})$, and that of the $C$ arrays.
Now, the maximum number of simultaneously running calls of $\mhog$ is the maximum number of simultaneously stored $C$ vectors. This number is bounded by the maximum number of branching nodes on a path from the root to a leaf of $EHOG(P)$, which is smaller than the minimum among $\max\{\lgr{s}: s \in P\}$ and $\card{P}$. Hence, the space complexity is $O(\norme{P}+ \card{P}\times\min(\card{P}, \max\{\lgr{s}: s \in P\})$.

\begin{theorem}\label{th:hog:complex}
  Let $P$ be a set of words. Then 
  Algorithm~\ref{algo:hog} computes $HOG(P)$ using  $O(\norme{P} + \card{P}^2)$ time and $O(\norme{P}+ \card{P}\times\min(\card{P}, \max\{\lgr{s}: s \in P\})$ space.
If all words of $P$ have the same length, then the space complexity is $O(\norme{P})$.
\end{theorem}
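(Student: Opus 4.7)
The plan is to verify three claims in sequence: correctness of Algorithm~\ref{algo:hog}, the time bound $O(\norme{P} + \card{P}^2)$, and the space bounds (including the equal-length specialization).

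Correctness reduces to showing that, upon termination of $\mhog(r)$, the bit vector $\bhog$ flags exactly those nodes of $EHOG(P)$ that lie in $P\cup \mathcal{O}v(P)$. I would prove this by induction on the depth-first recursion, using Invariants~\#1 and~\#2 already stated alongside Algorithm~\ref{algo:mhog}. The base case (leaves) is immediate, since each leaf represents a word of $P$ and thus belongs to $HOG(P)$. For the inductive step at an internal node $u$, the induction hypothesis supplies Invariant~\#2 at each child $v$; combining these via $\wedge$ yields Invariant~\#1 at $u$. Then scanning $R_l(u)$ sets $\bhog[u]$ to $\true$ precisely when some $i\in R_l(u)$ has no strictly longer overlap with any leaf below $u$; combined with Proposition~\ref{prop:node:ov}, this is equivalent to $u$ being a maximal overlap, i.e.\ a node of $HOG(P)$. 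The contraction at line~\ref{line:hog:contract} then produces $HOG(P)$ from $\bhog$.

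For the time bound I would decompose the work into four parts. Building $EHOG(P)$ from the Aho--Corasick automaton, building the lists $R_l$, and contracting the EHOG are all standard tree traversals in $O(\norme{P})$. The delicate part is amortising $\mhog$: each of the $\card{P}$ leaves initialises its $C$-vector in $O(\card{P})$, contributing $O(\card{P}^2)$; and the total number of $\wedge$-operations equals $\sum_{u\text{ internal}} (k_u - 1) = \#\text{leaves}-1 = \card{P}-1$ by the usual tree-counting identity, each costing $O(\card{P})$ and thus $O(\card{P}^2)$ more; finally, scanning all the $R_l(u)$ lists totals $O(\norme{P})$ because each suffix of each input word contributes at most one entry over all lists. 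Summing gives $O(\norme{P}+\card{P}^2)$.

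The space analysis combines the $O(\norme{P})$ footprint of $EHOG(P)$ with the number of $C$-vectors simultaneously alive. By Proposition~\ref{prop:mhog:calls}, the active calls of $\mhog$ at any moment form a chain from the root down to the current node; at a unary node the child's returned $C$ is merely forwarded upwards without creating a new live vector, whereas at a branching ancestor one ``partial merge'' $C$ is held on the stack while the deeper subtree is processed. Therefore the number of concurrent $C$-vectors is at most one plus the number of branching ancestors on the current root-to-leaf path, which is bounded by $\min(\card{P},\max\{\lgr{s}:s\in P\})$: the number of branching nodes on any root-to-leaf path is at most $\card{P}-1$ (each branching node leaves behind at least one extra leaf), and the depth of $EHOG(P)$ is at most the longest word length. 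Multiplying by $\card{P}$, the size of one vector, yields the stated bound. For the equal-length case with common length $\ell$ we have $\card{P}\cdot\ell=\norme{P}$, hence $\card{P}\cdot\min(\card{P},\ell)\leq\norme{P}$, which collapses the overall space to $O(\norme{P})$. The main technical care lies in the concurrency argument via Proposition~\ref{prop:mhog:calls} and in justifying that unary chains reuse a single $C$-vector; everything else is standard tree accounting.
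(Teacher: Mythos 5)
Your proposal is correct and follows essentially the same route as the paper: correctness via Invariants~\#1 and~\#2 of Algorithm~\ref{algo:mhog} together with Proposition~\ref{prop:node:ov}, the time bound by amortising the $C$-vector operations and the $R_l$ scans over the $O(\norme{P})$ tree work, and the space bound by counting simultaneously live $C$-vectors via Proposition~\ref{prop:mhog:calls} and the branching nodes on a root-to-leaf path. You actually make two steps more explicit than the paper does — the identity $\sum_u (k_u-1)=\card{P}-1$ bounding the number of $\wedge$-operations, and the collapse $\card{P}\cdot\min(\card{P},\ell)\leq\norme{P}$ in the equal-length case — but these are refinements of the same argument, not a different one.
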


\paragraph{An example of HOG construction}
Consider the instance $P := \{ tattatt, ctattat, gtattat, cctat \}$; the graph $EHOG(P)$ is shown Figure~\ref{a}.  Table~\ref{b} traces the execution of Algorithm~\ref{algo:mhog} for instance $P$. It describes for each internal node (leaves not included), the list $R_l$, the computation of $C$ vector, the final value of $\bhog$ and the list of specific pairs of words for which the corresponding node is a maximum overlap. We write $(x/y, z)$ as a shortcut for pairs $(x,z)$ and $(y,z)$. Another example is given in appendix.

\begin{figure}[htbp]
  \begin{minipage}{0.3725\linewidth}
    \hspace{-0.5cm}
    \subfloat[][EHOG of $P$]{\includegraphics[scale=.405675]{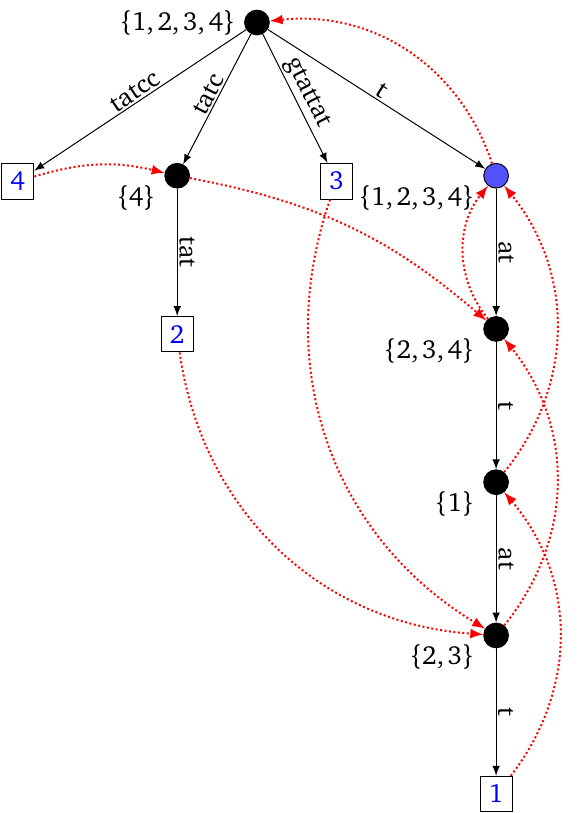}\label{a}}    
  \end{minipage}
  \begin{minipage}{0.62745\linewidth}
    \centering
    \subfloat[t][Weights of OG of $P$ in a matrix]{\label{b}
      \centering
      \setlength{\tabcolsep}{2\tabcolsep} 
      \begin{tabular}[b]{>{\columncolor[gray]{.8}}ccccc}
        \rowcolor[gray]{.6}
        (x,y) & 1 & 2 & 3 & 4\\
        1     & 4 & 0 & 0 & 0\\
        2     & 6 & 0 & 0 & 0\\
        3     & 6 & 0 & 0 & 0\\
        4     & 3 & 4 & 0 & 0
      \end{tabular}
      \setlength{\tabcolsep}{.5\tabcolsep}
    }
    \\
    \subfloat[][]{\small\label{c}
      \begin{tabular}[b]{>{\columncolor[gray]{.8}}llrrlr}
        \rowcolor[gray]{.6}
        node &$R_{\ell}$ & $C$(before) & $C$(after) &Specific pairs &$\bhog$\\
        ctat & \{4\} & 0000 & 0001 & (4,2) & 1\\
        tattat & \{2,3\} & 0000 & 0110 & (2,1) (3,1) & 1\\
        tatt & \{1\} & 0110 & 1110 & (1,1) & 1\\
        tat & \{2,3,4\} & 1110 & 1111 & (4,1) & 1\\
        t & \{1,2,3,4\} & 1111 & 1111 & empty & 0\\
        root & \{1,2,3,4\} & 0000 \^{} 0001 & 0000 &  &\\
        root & \{1,2,3,4\} & 0000 \^{} 0000 & 0000 &(2/3,2)   &\\
        root & \{1,2,3,4\} & 0000 \^{} 1111 & 0000 & (1/2/3/4,4)  &\\
        root & \{1,2,3,4\} & 0000 & 1111 & (2/3/4,3) & 1
      \end{tabular}
    }
  \end{minipage}
  \caption{\label{fig:ehog} \protect\subref{a} EHOG for instance $P := \{ tattatt, ctattat, gtattat, cctat \}$. \texttt{goto} transitions appear in black arcs, \texttt{Failure Links} in dotted red arcs. For each internal node, the list $R_L$ is given between brackets. \protect\subref{b} Overlap Graph of $P$ given in matrix form. \protect\subref{c} Trace of Algorithm~\ref{algo:mhog}.  For each internal node are shown: the word it represents, $R_{l}$, the bit vector $C$ when before and after the node is processed, \bhog, and the pairs for which it is a maximum overlap.  The node $t$ is the only internal node which  is not a maximal overlap for some pair, and indeed \bhog\ is set to 0. The computation for the root node shows each \texttt{and} between $C$ vectors from the children on four lines.}
\end{figure}



\section{Conclusion}\label{sec:conclusion}

The Hierarchical Overlap Graph (HOG) is a compact alternative to the Overlap Graph (OG) since it also encodes all maximal overlaps as the OG, but in a space that is linear in the norm of $P$. In addition, the HOG records the suffix- and the prefix-relationship between the overlaps, while the OG lacks this information, which is useful for computing greedy superstrings~\cite{Ukkonen-linear-greedy-1990,Cazaux-Rivals-JDA-16}. Because the norm of $P$ can be large in practice, it is thus important to build the HOG also in linear space, which our algorithm achieves if all words have the same length. 

For constructing the HOG, Algorithm~\ref{algo:hog} takes $O(\norme{P} + \card{P}^2)$ time.  Whether one can compute the HOG in a time linear in $\norme{P} + \card{P}$ remains open. An argument against this complexity is that all the information needed to build the Overlap Graph is encoded in the HOG.

The EHOG and HOG differs by definition, and the nodes of the HOG are a subset of the nodes of the EHOG. In practice or in average, is this difference in number of nodes substantial? There exist instances such that in the limit the ratio between the number of nodes of EHOG versus HOG tends to infinity when $\norme{P}$ tends to infinity while the number of input words remains bounded (see appendix). For instance, a small alphabet (e.g. DNA) favors multiple overlaps and tend to increase the number of nodes that are proper to the EHOG, while for natural languages HOG and EHOG tend to be equal.

For some applications like DNA assembly, it is valuable to compute approximate rather than exact overlaps. The approach proposed here does not easily extend to approximate overlaps. Some algorithms have been proposed to compute OG with arcs representing approximate overlaps, where approximation is measured by the Hamming or the edit distance \cite{ValimakiLM12}.

\paragraph{Acknowledgements:} {This work is supported by the Institut de Biologie Computationnelle ANR (ANR-11-BINF-0002), and D\'efi {MASTODONS}  at \href{http://www.lirmm.fr/~rivals/C3G}{C3G}  from CNRS.

{
  \bibliographystyle{plain}
  \bibliography{hog-art-arxiv}

\begin{thebibliography}{10}

\bibitem{Aho-Corasick-75}
A~Aho and M~Corasick.
\newblock Efficient string matching: an aid to bibliographic search.
\newblock {\em Comm ACM}, 18:333--340, 1975.

\bibitem{Blum1991}
Avrim Blum, Tao Jiang, Ming Li, John Tromp, and Mihalis Yannakakis.
\newblock Linear approximation of shortest superstrings.
\newblock In {\em ACM Symposium on the Theory of Computing}, pages 328--336,
  1991.

\bibitem{CanovasCR17}
Rodrigo C{\'{a}}novas, Bastien Cazaux, and Eric Rivals.
\newblock {The Compressed Overlap Index}.
\newblock {\em CoRR}, abs/1707.05613, 2017.

\bibitem{Cazaux-DCC-16}
Bastien Cazaux, Rodrigo C\'anovas, and Eric Rivals.
\newblock {Shortest DNA cyclic cover in compressed space}.
\newblock In {\em Data Compression Conf.}, pages 536--545. IEEE Computer
  Society Press, 2016.

\bibitem{Cazaux-Rivals-JDA-16}
Bastien Cazaux and Eric Rivals.
\newblock {A linear time algorithm for Shortest Cyclic Cover of Strings}.
\newblock {\em J. Discrete Algorithms}, 37:56--67, 2016.

\bibitem{Golovnev-exact-IPL-2014}
Alexander Golovnev, Alexander~S. Kulikov, and Ivan Mihajlin.
\newblock {Solving {SCS} for bounded length strings in fewer than
  2\({}^{\mbox{n}}\) steps}.
\newblock {\em Inf. Proc. Letters}, 114(8):421--425, 2014.

\bibitem{Gusfield-apsp-1992}
Dan Gusfield, Gad~M. Landau, and Baruch Schieber.
\newblock {An efficient algorithm for the All Pairs Suffix-Prefix Problem}.
\newblock {\em Inf. Proc. Letters}, 41(4):181--185, 1992.

\bibitem{Lim-Park-apsp-2017}
Jihyuk Lim and Kunsoo Park.
\newblock A fast algorithm for the all-pairs suffix–prefix problem.
\newblock {\em Theor. Comp. Sci.}, 698:14--24, 2017.

\bibitem{Peltola-83}
Hannu Peltola, Hans S{\"{o}}derlund, Jorma Tarhio, and Esko Ukkonen.
\newblock Algorithms for some string matching problems arising in molecular
  genetics.
\newblock In {\em {IFIP} Congress}, pages 59--64, 1983.

\bibitem{TarhioU88}
Jorma Tarhio and Esko Ukkonen.
\newblock A greedy approximation algorithm for constructing shortest common
  superstrings.
\newblock {\em Theor. Comp. Sci.}, 57:131--145, 1988.

\bibitem{Tustumi-etal-apsp-2016}
William H.~A. Tustumi, Simon Gog, Guilherme~P. Telles, and Felipe~A. Louza.
\newblock {An improved algorithm for the all-pairs suffix–prefix problem}.
\newblock {\em J. Discrete Algorithms}, 37:34--43, 2016.

\bibitem{Ukkonen-linear-greedy-1990}
Esko Ukkonen.
\newblock {A Linear-Time Algorithm for Finding Approximate Shortest Common
  Superstrings}.
\newblock {\em Algorithmica}, 5:313--323, 1990.

\bibitem{ValimakiLM12}
Niko V{\"{a}}lim{\"{a}}ki, Susana Ladra, and Veli M{\"{a}}kinen.
\newblock {Approximate All-Pairs Suffix/Prefix overlaps}.
\newblock {\em Inf. Comput.}, 213:49--58, 2012.

\end{thebibliography}
}
\pagebreak
\appendix
\section*{Appendix A: Difference in number of nodes between EHOG and HOG}
\label{sec:diff:ehog:hog}

Consider a finite alphabet, say $\Sigma = \{a,c,g,t\}$.
Let $s$ be any word formed by a permutation of $\Sigma$. Here, $s$ could be $s := acgt$. The length of $s$ is the cardinality of $\Sigma$.

Let $z$ be a positive integer. We build the following  instance $P_z$ by taking :
\begin{itemize}
\item the word $w$ made of the concatenation of $z$ copies of word $s$,
\item and $(\card{\Sigma}-1)$ other words, which are $(\card{\Sigma}-1)$ cyclic shifts of $w$, denoted $w_1, \ldots, w_{(\card{\Sigma}-1)}$.
\end{itemize}
For a letter $\alpha$ and a word $v$, the (first) cyclic shift of the word ${\alpha}v$ is $v{\alpha}$. In our construction,
\begin{itemize}
\item $w_1$ is the cyclic shift of $w$, 
\item $w_2$ is the cyclic shift of $w_1$,
  \\[.1cm]
  \ldots
\item $w$ is the cyclic shift of $w_{(\card{\Sigma}-1)}$.
\end{itemize}
Note that  $\card{P_z} =  \card{\Sigma} = \lgr{s}$ and that 
$\norme{P_z} =  \card{P_z} \times \lgr{w} =  \card{P_z} \times z \times \lgr{s}  =  \card{P_z}^2 \times z$

For an  instance $P_z$,
\begin{itemize}
\item the $EHOG(P_z)$ contains exactly as many nodes as the Aho-Corasick automaton contains states, that is $\norme{P_z}$ nodes (leaves included). Indeed, all internal nodes are overlaps.
\item The $HOG(P_z)$ contains $\card{P_z}^2$ internal nodes and $\card{P_z}$ leaves.
\end{itemize}

Let us denote the number of nodes of the EHOG and of the HOG respectively by $\lgr{EHOG(P_z)}$ and  $\lgr{HOG(P_z)}$.
Then the ratio
\begin{eqnarray}\label{eq:ratio}
  \frac{\lgr{EHOG(P_z)}}{\lgr{HOG(P_z)}} &= & \frac{\norme{P_z}}{\card{P_z}^2 + \card{P_z}}\\
                                          &= & \frac{z \times \card{P_z}^2}{\card{P_z}^2 + \card{P_z}}\\
                                          &= & \frac{z}{1+ \frac{1}{\card{P_z}}}\\
                                          &\to_{z \rightarrow \infty} &+ \infty.
\end{eqnarray}
\section*{Appendix B: An example of HOG construction}
\label{sec:hog:ex:deux}
Consider the instance $P := \{ bcbcb, baba, abcba, abab \}$.
Its OG and EHOG are given in the figure below.
The HOG is exactly the EHOG, since all internal nodes of the EHOG are in fact maximal overlaps for some pairs of words of $P$.
\begin{figure}[h!]
  \centering
  \subfloat[][Overlap Graph (OG) of $P$]{\includegraphics[scale=.4675]{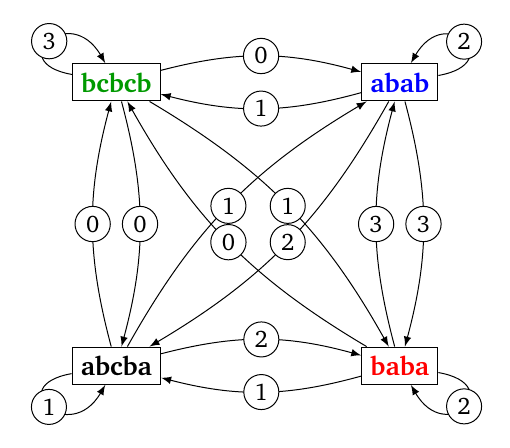}\label{sa}}
  \subfloat[][Both $EHOG(P)$ and $HOG(P)$]{\includegraphics[scale=.4675]{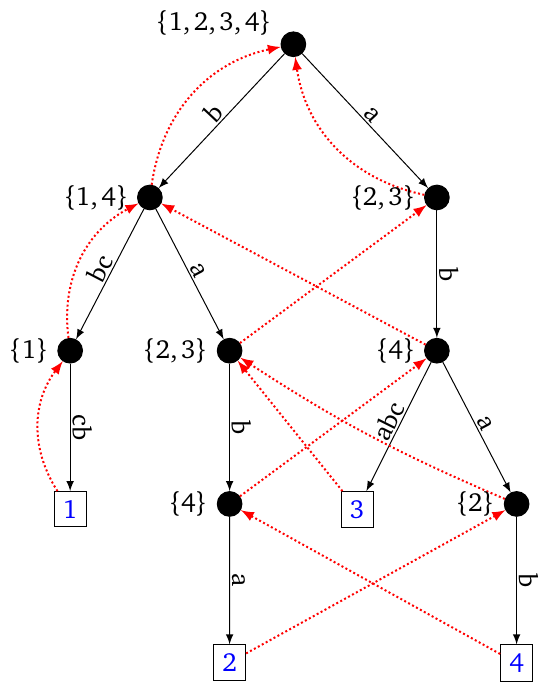}\label{sb}}
  \\
  \subfloat[][]{\small\label{sc}
    \begin{tabular}[b]{>{\columncolor[gray]{.8}}llrrlr}
      \rowcolor[gray]{.6}
      node &$R_{\ell}$ & $C$(before) & $C$(after) &Specific pairs &$\bhog$\\
      bcb & \{1\} & 0000 & 1000 & (1,1) & 1\\
      bab & \{4\} & 0000 & 0001 & (4,2) & 1\\
      ba & \{2,3\} & 0001 & 0111 & (2,2) (3,2) & 1\\
      b & \{1, 4\} & 1000 \^{} 0111 &  &  & \\
      b & \{1, 4\} & 0000 & 1001 & (4,1) (1,2) & 1\\
      aba & \{2\} & 0000 & 0100 & (2,4) & 1\\
      ab & \{4\} & 0000 \^{} 0100 &  &  & \\
      ab & \{4\} & 0000 & 0001 & (4,3) (4,4) & 1\\
      a & \{2,3\} & 0001 & 0111 & (2,3) (3,3) (3,4) & 1\\
      root & \{1,2,3,4\} & 1001 \^{} 0111 &  &  & \\
      root & \{1,2,3,4\} & 0001 & 1111 & (1,3) (1,4) (2,1) (3,1) & 1\\
    \end{tabular}
  }
  \caption{\label{fig:exsup} \protect\subref{sa} and \protect\subref{sb} OG and EHOG for instance $P := \{ bcbcb, baba, abcba, abab \}$. In the EHOG, \texttt{goto} transitions appear in black arcs, \texttt{Failure Links} in dotted red arcs. For each internal node, the list $R_L$ is given between bracket. \protect\subref{sc} Trace of Algorithm~\mhog.  For each internal node are shown: the word it represents, $R_{l}$, the bit vector $C$ when before and after the node is processed, \bhog, and the pairs for which it is a maximum overlap.  All nodes of the EHOG also belong to the HOG (as shown by the values of \bhog\ being 1).}
\end{figure}
\end{document}